\documentclass[12pt,a4paper,thmsb]{article}
\usepackage{amsmath}
\usepackage{amssymb}
\usepackage{amsfonts}
\usepackage{color}
\usepackage{enumerate}
\newtheorem{theorem}{Theorem}

\newtheorem{proposition}{Proposition}
\newtheorem{corollary}{Corollary}

\newenvironment{proof}[1][Proof]{\noindent\textbf{#1.} }{\ \rule{0.5em}{0.5em}}

\newcommand*\samethanks[1][\value{footnote}]{\footnotemark[#1]}
\begin{document}

\title{\textbf{Sharing the proceeds from a hierarchical venture when agents have needs}\thanks{R. Pablo Arribillaga and Pablo Neme acknowledge financial support from Universidad Nacional de San Luis (UNSL) through grants 032016, 031620 and 031323, from Consejo Nacional de Investigaciones Científicas y Técnicas (CONICET) through grant PIP 112-200801-00655, and from Agencia Nacional de Promoción Científica y Tecnológica through grant PICT 2017-2355. Juan D. Moreno-Ternero acknowledges grant PID2023-146364NB-I00, funded by MCIU/AEI/10.13039/501100011033 and FSE+.
}}

\author{\textbf{R. Pablo Arribillaga}%
\thanks{Instituto de Matem\'{a}tica Aplicada San Luis, Universidad Nacional de San Luis and CONICET. 
Ej\'{e}rcito de los Andes 950. 
5700, San Luis, Argentina; rarribi@unsl.edu.ar (Arribillaga) and paneme@unsl.edu.ar (Neme).%
} \\
\textbf{Juan D. Moreno-Ternero}%
\thanks{Department of Economics, Universidad Pablo de Olavide, Spain; email: jdmoreno@upo.es%
}
\\ 
\textbf{Pablo Neme}\samethanks [2]
\\ 
}
\maketitle

\begin{abstract}
We consider a setting in which a set of agents are hierarchically organized for a joint venture. They each generate revenues for the joint venture and have individual needs to cover. The aim is to distribute aggregate revenues appropriately. We characterize a family of need-adjusted geometric rules where the net revenue (after covering needs) `bubbles up' in the hierarchy, as well as a need-adjusted serial rule in which the net revenue is equally shared among each agent and his predecessors in the hierarchy. 
\end{abstract}
\medskip{}

\noindent \textbf{\textit{JEL numbers}}\textit{: C71, D63, L24, M52.}\medskip{}

\noindent \textbf{\textit{Keywords}}\textit{: Hierarchies, Joint ventures, Resource allocation, Need-adjusted geometric rules, Need-adjusted serial rule.}\medskip{}

\newpage

\section{Introduction}
Agents typically engage in joint ventures. Often, these ventures are hierarchically organized, with each agent located in a different layer of the hierarchy, reflecting a different degree of responsibility, command, or even seniority (e.g., Garicano, 2000; Demange, 2004). In this paper, we are concerned with a basic resource allocation problem arising in those settings. That is, suppose agents generate revenues to the (hierarchical) venture, but also have needs to cover. The issue is how to allocate the overall revenue among participating agents, while taking into account their needs and locations in the hierarchy. 
Thus, we follow Hougaard et al. (2017), who introduced the problem of sharing revenues in a hierarchy.\footnote{Related problems had been studied much earlier in the literature (e.g., Claus and Kleitman, 1973; Littlechild and Owen, 1973; Bird, 1976). More recently, see Hougaard et al. (2022) and Gudmundsson et al. (2025).} The novelty of our model is to introduce individual needs in the problem.\footnote{Similarly, Mart\'{\i}nez and Moreno-Ternero (2024, 2026a) have recently extended the basic setting of redistribution problems (e.g., Ju et al., 2007) to account for needs. 
} The requirement that the allocation process grants each individual an amount at least as large as to cover the need will be crucial for our analysis. We shall formalize this principle as the axiom of \textit{needs lower bound}, which aligns with the long tradition in axiomatic work that endorses lower bounds as a basic axiom of distributive justice (e.g., Rawls, 1971; Moulin, 2004; Moreno-Ternero and Roemer, 2006).

Hougaard et al. (2017) characterize in their setting a family of geometric rules in which revenue `bubbles up' in the hierarchy. With a geometric rule, the lowest-ranked agent gets a share of his revenue, his immediate superior gets the same share of his resulting revenue, after adding the remaining `surplus' from the lowest-ranked agent, etc. This is reminiscent of the so-called geometric incentive tree mechanisms, which are proven successful for social mobilization (e.g., Pickard et al., 2011). We characterize in this paper \textit{need-adjusted geometric rules}, which extend the family of geometric rules mentioned above to the case in which agents have needs (which are measured in the same monetary units as the revenues). In our benchmark case, we assume that each agent in the hierarchy has a need below the revenue he brings to the hierarchy. With a need-adjusted geometric rule, the lowest-ranked agent gets his need, as well as a share of his remaining revenue, his immediate superior also gets his need, as well as the same share of his resulting revenue, after adding the remaining `surplus' from the lowest-ranked agent, etc. We show that the family of need-adjusted geometric rules is characterized by the needs lower bound axiom, together with the same axioms Hougaard et al. (2017) considered in their characterization (with the exception of a specific modification of \textit{scale invariance}). 

We highlight three focal rules within the family, arising when the share of (net) revenue agents keep is, respectively, zero, one-half and one. That is, the \textit{need-adjusted full-transfer rule}, the \textit{need-adjusted balanced transfer rule} and the \textit{no-transfer rule}, respectively. We characterize each of those rules adding one additional axiom to the list characterizing the family of geometric rules. 

We then single out another rule outside the family of need-adjusted geometric rules. This rule states that each agent keeps his need and shares the residual equally with all his predecessors. The rule is dubbed the \textit{serial rule} as it is reminiscent of the cost-sharing rule introduced by Moulin and Shenker (1992). We characterize this rule by the combination of one of the axioms in the characterization of the need-adjusted geometric rules, and two axioms strengthening two other axioms used therein. 

Our results also have connections to the literature on river sharing. 
Ambec and Sprumont (2002) is the seminal contribution to analyze the allocation of water that flows along a river among the agents located therein. 
More recently, Mart\'{\i}nez and Moreno-Ternero (2025a, 2026b) have analyzed a resourcist version of this model of river sharing in which agents are not endowed with utility functions. As a consequence, the model becomes a mirror version of the problem of revenue sharing in hierarchies, with no needs. They characterize counterpart rules for the geometric and serial rules mentioned above, without adjusting for needs. Earlier, Ansink and Weikard (2012, 2015) extended the river sharing model to account for agents' claims on the river water too, which can be seen as similar to the concept of needs in our model. Finally, Yang et al. (2025) and Mart\'{\i}nez and Moreno-Ternero (2025b) have studied another stylized model to analyze the problem of distributing a budget of emissions permits among agents located along a river.\footnote{Ni and Wang (2007) and Dong et al. (2012) studied earlier another model of sharing the cost of cleaning a polluted river.} 

The rest of the paper is organized as follows. In Section \ref{seccion the model}, we introduce the model and basic axioms for our analysis. In Section 3, we present the main results of our analysis. First, the characterization of the family of need-adjusted geometric rules. Then, the independent characterizations of each of the three focal elements within the family. Finally, the characterization of the need-adjusted serial rule. In Section 4, we explore three extensions of our benchmark analysis. First, we consider the larger domain of problems in which needs can be collectively (but not necessarily individually) covered by revenues. Second, we consider the restricted domain in which needs are all null. Finally, we study in more detail the canonical two-agent case.
\newpage

\section{The model}\label{seccion the model}
Suppose there exists a set of potential
\textbf{agents}, identified with the set of natural numbers. Let
$\mathcal{M}$ be the class of finite subsets of the natural numbers,
with generic element $M$. Each set $M\in \mathcal{M}$ will represent
a \textbf{linear hierarchy}, with the convention that lower numbers
in $M$ refer to lower positions in the hierarchy. For instance, if
$M=\{1,\dots, m\}$, then $1$ is representing the agent with the
lowest rank in the hierarchy, whereas $m$ is representing the agent
with the highest rank. 

Agents in each (linear) hierarchy will be involved in a joint venture
to which all of them contribute. Formally, for each $i\in M$, let
$r_{i}\in\mathbb{R}_{+}$ be the \textbf{revenue} that agent $i$
generates, and $r=(r_{i})_{i\in M}$ the profile of
revenues. Agents also have individual needs, assumed to be below the revenues they generate. Formally, for each $i\in M$, let
$z_i\le r_{i}\in\mathbb{R}_{+}$ be the \textbf{need} that agent $i$
has, and $z=(z_{i})_{i\in M}$ the profile of
needs. 
A {\it linear hierarchy with needs revenue sharing problem}, or simply, a
\textbf{problem} is a triplet consisting of a (linear) hierarchy
$M\in\mathcal{M}$, a profile of revenues
$r\in\mathbb{R}_{+}^{|M|}$ and a profile of needs
$z\in\mathbb{R}_{+}^{|M|}$ such that $r_i\ge z_{i}$, for each $i\in M$. Let $\mathcal{Z}^{M}$ be the set of
problems involving the hierarchy $M$ and
$\mathcal{Z}=\bigcup_{M\in\mathcal{M}}\mathcal{Z}^{M}$.

Given a problem $(M,r,z)\in \mathcal{Z}$, an \textbf{allocation} is a
vector $x\in\mathbb{R}^{|M|}_+$ satisfying {\bf balance}, i.e.,
$\sum_{i\in M}x_{i}=\sum_{i\in M}r_{i}$.
An \textbf{allocation rule} is a mapping $\phi$ assigning to each
problem $(M,r,z)\in \mathcal{Z}$ an allocation $\phi(M,r,z)$. We assume
from the outset that rules are \textbf{anonymous}, i.e., for each
problem $(M,r,z)\in \mathcal{Z}$, and for each strictly monotonic function $g:M\to M'$, $\phi_{g(i)}(M',r',z')=\phi_{i}(M,r,z)$,
where $(r'_{g(i)},z'_{g(i)})=(r_{i},z_{i})$, for each $i\in M$. Thus, in what follows for this section, we assume, without loss of generality, that
$M=\{1,\dots,m\}.$

We now consider several axioms for allocation rules. First, the axiom indicating that all agents are guaranteed at least their individual needs. Formally,

\bigskip\noindent \textbf{Needs Lower Bound}: For each $(M,r,z)\in \mathcal{Z}$, and each $i\in M$,
\[
\phi_{i}(M,r,z)\ge z_i.
\]

We now consider a minimalistic version of the principle of consistency (e.g., Thomson, 2012). 
Suppose the agent with the lowest rank leaves the hierarchy, after the allocation takes place, and the immediate predecessor receives the residual revenue. The axiom
then states that the solution of the new problem agrees with the
solution of the original problem for all the standing agents in the
hierarchy.\footnote{
The axiom extends the one introduced by Hougaard et al., (2017) for the case without needs, which is itself reminiscent of the so-called ``first agent consistency'' axiom
proposed by Potters and Sudh\"olter (1999) for airport problems.}
Formally,

\bigskip\noindent \textbf{Lowest Rank Consistency}: For $|M|\geq 2,$ where $(M,r,z)\in
\mathcal{Z}$, and
$(M\setminus\{1\},(r_{2}+r_{1}-\phi_{1}(M,r,z),r_{M\setminus\{1,2\}}),z_{M\setminus\{1\}})
\in \mathcal{Z},$ we have,
\[
\phi_{M\setminus\{1\}
}(M,r,z)=\phi\left(M\setminus\{1\},(r_{2}+r_{1}-\phi_{1}(M,r,z),r_{M\setminus\{1,2\}}),z_{M\setminus\{1\}}\right).
\]

We now consider axioms referring instead to the top of the hierarchy. First, the axiom stating that the characteristics of the top agent are irrelevant for the allocation process of the remaining agents. Formally, 

\bigskip\noindent \textbf{Highest Rank Independence}: For each $(M,r,z)\in \mathcal{Z}$, each $\hat{r}_{m}\in \Bbb{R}_{+}$ and each $\hat{z}_{m}\in \Bbb{R}_{+}$ such that $\hat{r}_{m}\ge\hat{z}_{m}$, 
\[
\phi_{M\setminus\{m\}}(M,r,z)=\phi_{M\setminus\{m\}}\left(M,(r_{-m},\hat{r}_{m}),(z_{-m},\hat{z}_{m})\right).
\]

The next axiom refers to hypothetical manipulations from the top agent, who might be bringing a follower to the hierarchy to reallocate revenues and needs. The axiom states that such a move should not affect the remaining agents in the hierarchy.\footnote{The axiom extends the namesake axiom introduced by Hougaard et al., (2017) for the case without needs, which is itself a specific version of the merging-splitting proofness axioms studied in Ju et al. (2007) and Ju (2013).} Formally,

\bigskip\noindent
\textbf{Highest Rank Splitting Neutrality}: For each $(M,r,z)\in
\mathcal{Z}$, let $(M',r',z')\in \mathcal{Z}$ be such that $M'=M\cup
\{k\}$, $k> m$, $r_{m}=r'_{k}+r'_{m}$,   $z_{m}=z'_{k}+z'_{m}$, $r_{M\backslash
\{m\}}^{\prime }=r_{M\backslash \{m\}}$, and $z_{M\backslash
\{m\}}^{\prime }=z_{M\backslash \{m\}}$. Then, $$ \phi_{M\backslash
\{m\}}(M',r',z')=\phi_{M\backslash \{m\}}\left( M,r,z\right). $$

We conclude with an axiom applying only to two-agent problems. It formalizes the standard notion of linearity (over agents' characteristics).

\bigskip\noindent
 \textbf{Bilateral Linearity}: For each pair $(M,r,z),(M,r',z')\in \mathcal{Z}$ with $M=\{1,2\}$, and each pair $\alpha,\beta>0$,
 \[
\phi(M,\alpha r+\beta r',\alpha z+\beta z')\ =\alpha\phi(M,r,z)+\beta\phi(M,r',z').
\]

\section{The main results}

\subsection{Need-adjusted geometric rules}
We now introduce a family of rules, formalizing the procedure that the lowest-ranked agent gets his need, as well as a share $\lambda\in[0,1]$ of
her revenue, his immediate superior gets his need, a share $\lambda$ of his
augmented revenue (after accounting for any `surplus' from the lowest-ranked
agent), etc., and the highest-ranked agent gets the residual. Formally,
if $M=\{1,\dots, m\}$, payment shares are determined recursively as

$x_{1}^{\lambda}=z_{1}+\lambda(r_{1}-z_{1})$,

$x_{2}^{\lambda}=z_{2}+\lambda(r_{2}-z_{2}+(r_{1}-x_{1}^{\lambda}))=z_{2}+\lambda(r_{2}-z_{2}+(1-\lambda)(r_{1}-z_{1}))$,


and so forth. That is,
\begin{equation}
x_{i}^{\lambda}=z_{i}+\lambda \left(r_{i}-z_i\right)+(1-\lambda)x_{i-1}^{\lambda}
\label{lambdaz}
\end{equation}
for each $i\in M\setminus\{m\}$, with the notational convention that $x_{0}^{\lambda}=0$. Furthermore,
\begin{equation}
x_{m}^{\lambda}=\sum_{i=1}^{m}r_{i}-\sum_{i=1}^{m-1}x_{i}^{\lambda}.\label{lambdazm}
\end{equation}
Note that (\ref{lambdaz}) and (\ref{lambdazm}) can be given the closed-form
expressions
$$x_i^{\lambda} = z_{i}+\lambda \left(r_i-z_{i} + (1- \lambda)(r_{i-1}-z_{i-1}) + \dots + (1-\lambda)^{i-1}(r_1-z_1)\right),$$ for $i= 1,\dots,
m-1$ and
$$x_m^{\lambda} = r_m + (1- \lambda)(r_{m-1}-z_{m-1}) + \dots + (1-\lambda)^{m-1}(r_1-z_1).$$

Denote the corresponding family of rules so defined,
which we call \textbf{need-adjusted geometric rules}, by
$\{\phi^{\lambda}\}_{\lambda\in[0,1]}$.

\bigskip
\noindent \textbf{Example 1:} Consider the problem
$(\{1,2,3,4\},(21,1,10,10),(1,1,5,5))$, i.e., the (linear) hierarchy made of four
agents, $1$, $2$, $3$, and $4$, in which agent $1$ generates a revenue of
$21$ and has a need of $1$, agent $2$ a revenue of $1$ and has a need of $1$, agent $3$ a revenue of $10$, and has a need of $5$, and agent $4$ a revenue of $10$, and has a need of $5$. 
Figure 1 below illustrates the situation.

The need-adjusted geometric rules select the allocation
$$
(1+20\lambda,1+20\lambda(1-\lambda),5(1+\lambda)-20(1-\lambda)\lambda^2),
$$
for each $\lambda\in[0,1]$. 
\newpage

$$
 \begin{picture}(190,40)
 \put(100,10){\line(0,-1){40}}
 \put(92.5,10){\dashbox{15}(15,15)}
 \put(100,-50){\line(0,-1){40}}
 \put(92.5,-50){\dashbox{15}(15,15)}
 \put(100,-30){\vector(0,-1){1}}
 \put(100,-90){\vector(0,-1){1}}
\put(92.5,-110){\dashbox{15}(15,15)}
\put(100,-110){\line(0,-1){40}}
\put(100,-150){\vector(0,-1){1}}
\put(92.5,-170){\dashbox{15}(15,15)}
\put(96.5,12){4}\put(116,-167){21,1}\put(97,-167){1}
\put(96.5,-47){3}\put(116,-107){1,1}\put(97,-107){2}
\put(116,-47){10,5}\put(116,12){10,5}
 \thicklines
\end{picture}
$$

\bigskip

\bigskip
\vspace{5.5 cm} \qquad\qquad\qquad\qquad\qquad\quad\footnotesize{\textbf{Figure 1: A (linear) hierarchy with needs}.}

\normalsize
\bigskip

As the next result states, the above family is characterized by the combination of the axioms introduced above. 
\begin{theorem}\label{teorema de caracterizacion de la geometrica con r>z}
    A rule satisfies needs lower bound, lowest rank consistency, highest rank independence, highest rank splitting neutrality and bilateral linearity if and only if it is a need-adjusted geometric rule.
\end{theorem}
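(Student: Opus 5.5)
Necessity is routine and I would do it first. Needs lower bound holds because every term added to $z_i$ in the closed forms is nonnegative, using $r_j\ge z_j$ and $\lambda\in[0,1]$. Lowest rank consistency follows from the recursion (\ref{lambdaz}). After agent $1$ leaves, agent $2$'s new revenue is $r_2+(1-\lambda)(r_1-z_1)$ and his need is still $z_2$. So his new net revenue is $r_2-z_2+(1-\lambda)(r_1-z_1)$, and the recursion for the remaining agents is unchanged. Highest rank independence holds because $x_i^\lambda$ for $i<m$ involves only data of agents $1,\dots,i$. Highest rank splitting neutrality holds for the same reason: agents in $M\setminus\{m\}$ are not top in either problem, and the formula for agent $m$ as a non-top agent in $M'$ coincides with his formula as a non-top agent. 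Bilateral linearity is immediate, since for $m=2$ we have $x_1=(1-\lambda)z_1+\lambda r_1$, which is linear in $(r,z)$, and $x_2$ is determined by balance.

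For sufficiency, I would start with the two-agent case. By highest rank independence, $\phi_1(\{1,2\},r,z)=f(r_1,z_1)$ for some function $f$ defined on $\{r_1\ge z_1\ge 0\}$. Bilateral linearity makes $f$ additive and positively homogeneous on this cone, so $f(r_1,z_1)=a r_1+b z_1$; I obtain this by writing $(r_1,z_1)=(r_1-z_1)(1,0)+z_1(1,1)$ and handling degenerate zero coefficients via homogeneity. Needs lower bound at $r_1=z_1$ forces $a+b\ge 1$. Agent $2$ then receives $r_1+r_2-f$, and needs lower bound with $r_2=z_2$ arbitrary and $r_1=z_1$ gives $(a+b)z_1\le z_1$, hence $a+b=1$. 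Setting $\lambda=a$, we get $f=z_1+\lambda(r_1-z_1)$. Needs lower bound with $z_1=0$ gives $\lambda\ge0$, and with $r_2=z_2$ gives $\lambda\le1$. Single-agent problems are trivial by balance.

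The main obstacle is linking the two-agent $\lambda$ to general $m$, and in particular showing that agent $1$'s payoff in an $m$-agent problem equals the two-agent formula. I would use highest rank splitting neutrality backwards by merging. Take $(M,r,z)$ with $M=\{1,\dots,m\}$ and merge agents $2,\dots,m$ step by step from the top. Each step preserves the payoffs of all agents below the merged pair, and each merged problem remains in $\mathcal{Z}$ because $r_j\ge z_j$ is additive. After $m-2$ merges we have a two-agent problem in which agent $1$ still has $(r_1,z_1)$, so $\phi_1(M,r,z)=z_1+\lambda(r_1-z_1)$. Then lowest rank consistency reduces to the problem on $M\setminus\{1\}$ with agent $2$'s revenue equal to $r_2+(1-\lambda)(r_1-z_1)$. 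This problem lies in $\mathcal{Z}$ because that revenue is at least $r_2\ge z_2$. Induction on $|M|$, combined with anonymity to relabel $M\setminus\{1\}$ as $\{1,\dots,m-1\}$, then yields recursion (\ref{lambdaz}), and the top agent's payoff follows from balance.

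One subtlety to check is that $\lambda$ is the same constant for every two-agent problem regardless of which labels are used; anonymity guarantees this. Highest rank independence is only needed in the two-agent step.
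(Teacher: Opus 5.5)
Your proof is correct. It follows the paper's overall plan: induction on $|M|$, a two-agent base case from highest rank independence, bilateral linearity and needs lower bound, then fixing agent $1$'s payoff and finishing with lowest rank consistency plus the induction hypothesis. Two sub-steps are handled differently.

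\emph{Base case.} You first get linearity of $f$ on the cone and then let needs lower bound force $a+b=1$ and $0\le\lambda\le 1$. The paper instead first brackets $z_1\le\phi_1\le r_1$, using highest rank independence and balance on $((r_1,0),(z_1,0))$. It then writes $\phi_1=z_1+\lambda(r_1-z_1)$ with a possibly problem-dependent $\lambda$, and shows $\lambda$ is constant via the split $(r,z)=(r-z,0)+(z,z)$. The ingredients are the same, and your ordering avoids the paper's normalization by $r_1-z_1$.

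\emph{Inductive step.} This is the more substantive difference. The paper obtains $\phi_1$ in three moves: highest rank independence replaces the top agent's data by $(0,0)$, highest rank splitting neutrality deletes that null agent, and the $k$-agent hypothesis applies. You instead merge all of agent $1$'s superiors by repeated splitting neutrality and read off the two-agent formula. The trade-off is that the paper uses splitting neutrality only for splits that detach a null top agent, but needs highest rank independence at every population size. Your argument needs splitting neutrality for arbitrary splits, but highest rank independence only for two-agent problems. You also check explicitly that the merged problems, and the reduced problem in lowest rank consistency, lie in $\mathcal{Z}$; the paper leaves this implicit.

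One small point on necessity. Check lowest rank consistency against the closed form, equivalently against $x_i^\lambda-z_i=\lambda(r_i-z_i)+(1-\lambda)(x_{i-1}^\lambda-z_{i-1})$. Recursion (\ref{lambdaz}) as printed carries $(1-\lambda)x_{i-1}^\lambda$ and does not match the closed form, although your verbal computation already uses the correct version. Also, the clause about agent $m$ in your splitting-neutrality check can be dropped, since that axiom says nothing about him.
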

\textit{Proof:} It is not difficult to see that the need-adjusted geometric rules
satisfy all the axioms in the statement of the theorem. 
Conversely, let $\phi$ be a rule satisfying all the axioms in
the statement of the theorem. 
The proof follows by mathematical induction:
\begin{description}
    \item[\textbf{Base step:} $\boldsymbol{M=\{1,2\}}.$] Let $r=(r_{1},r_{2})$ and $z=(z_{1},z_{2})$. 
By \textit{needs lower bound}, $\phi_{1}(M,r,z)\geq z_1$.
Now we claim that $\phi_{1}(M,r,z)\leq r_{1}$, so $\phi_{1}(M,r,z)=z_1+\lambda
(r_{1}-z_1)=\phi_{1}^{\lambda}(M,r,z)$ for some $\lambda\in[0,1]$. Indeed,
assume that $\phi_1(M,r,z) > r_1;$ then by \textit{highest rank independence} $\phi_1(M,r,z) = \phi_1(M, (r_1,0),(z_1,0))$ so by balance
$\phi_1(M,r,z) \leq r_1$, a contradiction.

By \textit{highest rank independence}, $\lambda$ is
independent of $r_{2}$ and $z_{2}$. Moreover, $\lambda$ is independent of
$r_{1}$ and $z_{1}$. 
To see this, suppose, by contradiction, that we have
$\tilde{r}=(\tilde{r}_{1},\tilde{r}_{2})$ with $r_{2}=\tilde{r}_{2}$,
$\tilde{z}=(\tilde{z}_{1},\tilde{z}_{2})$ with $z_{2}=\tilde{z}_{2}$
and such that $\phi_{1}(M,r,z)=z_1+\lambda (r_{1}-z_1)$ and
$\phi_{1}(M,\tilde{r},\tilde{z})=\tilde{z}_1+\widetilde{\lambda}(\tilde{r}_{1}-\tilde{z}_{1})$ with
$\lambda\neq\widetilde\lambda$. 
By \textit{bilateral linearity},
$$\phi_{1}(M,r,z)=\phi_{1}(M,r-z+z,0+z)=\phi_{1}(M,r-z,0)+\phi_{1}(M,z,z)=$$ $$(r_1-z_1)\phi_{1}(M,(1,\frac{(r_2-z_2)}{(r_1-z_1)}),0)+\phi_{1}(M,z,z).$$
By \textit{needs lower bound}, $\phi_{1}(M,z,z)=z_1$. Therefore, 
$\phi_{1}(M,r,z)=(r_1-z_1)\phi_{1}(M,(1,\frac{r_2-z_2}{r_1-z_1}),0)+z_{1}.$ 
Then, $\lambda=\phi_{1}(M,(1,\frac{r_2-z_2}{r_1-z_1}),0).$
Again, by \textit{bilateral linearity},
$$\phi_{1}(M,\tilde{r},\tilde{z})=\phi_{1}(M,\tilde{r}-\tilde{z}+\tilde{z},0+\tilde{z})=\phi_{1}(M,\tilde{r}-\tilde{z},0)+\phi_{1}(M,\tilde{z},\tilde{z}).$$
By \textit{needs lower bound}, $\phi_{1}(M,\tilde{z},\tilde{z})=\tilde{z}_1$. Therefore, 
$\phi_{1}(M,\tilde{r},\tilde{z})=(\tilde{r}_1-\tilde{z}_1)\phi_{1}(M,(1,\frac{\tilde{r}_2-\tilde{z}_2}{\tilde{r}_1-\tilde{z}_1}),0)+\tilde{z}_{1}.$
Then, $\tilde{\lambda}=\phi_{1}(M,(1,\frac{\tilde{r}_2-\tilde{z}_2}{\tilde{r}_1-\tilde{z}_1}),0).$
Altogether, $\phi_{1}(M,(1,\frac{r_2-z_2}{r_1-z_1}),0)\neq \phi_{1}(M,(1,\frac{\tilde{r}_2-\tilde{z}_2}{\tilde{r}_1-\tilde{z}_1}),0)$, which contradicts our previous statement that such a value is independent of $\frac{r_2-z_2}{r_1-z_1}$ and $\frac{\tilde{r}_2-\tilde{z}_2}{\tilde{r}_1-\tilde{z}_1}$. Now, by \emph{balance},
$\phi_{2}(M,r,z)=r_{2}+r_{1}-\phi_{1}^{\lambda}(M,r,z)=\phi_{2}^{\lambda}(M,r,z)$.

\item[\textbf{Induction Hypothesis:} $\boldsymbol{M=\{1,\dots, k\}.}$] Next, suppose there is $\lambda$ such that $\phi(M, r, z) =\phi^{\lambda}(M, r, z)$ for each problem $(M, r, z)\in \mathcal{Z}$ with $M=\{1,\dots, k\}$.

\item[\textbf{General Case:} $\boldsymbol{M=\{1,\dots, k+1\}.}$] Now, consider the problem $(M, r, z)\in \mathcal{Z}$, with $M=\{1,\dots, k+1\}$.
By \textit{highest rank  independence}, 
$\phi_{i}(M,r,z)=\phi_{i}(M,(r_1,\ldots,r_k,0),(z_1,\ldots,z_k,0))$, for all $i=1,\ldots,k$.

By \textit{highest rank splitting neutrality}, 
$$\phi_{i}(M,(r_1,\ldots,r_k,0),(z_1,\ldots,z_k,0))=\phi_{i}(\{1,\ldots, k\},(r_1,\ldots,r_k),(z_1,\ldots,z_k)),$$
for all $i=1,\ldots,k-1$.
Then,
$\phi_{i}(M,r,z)=\phi_{i}(\{1,\ldots, k\},(r_1,\ldots,r_k),(z_1,\ldots,z_k)),$
for all $i=1,\ldots,k-1.$
By the induction hypothesis, 
$$\phi_{i}(\{1,\ldots, k\},(r_1,\ldots,r_k),(z_1,\ldots,z_k))=\phi^\lambda_{i}(\{1,\ldots, k\},(r_1,\ldots,r_k),(z_1,\ldots,z_k)),$$
for all $i=1,\ldots,k$.
In particular,
\begin{equation}\label{agent1}
    \phi_{1}(M,r,z)=\phi^\lambda_{1}(\{1,\ldots, k\},(r_1,\ldots,r_k),(z_1,\ldots,z_k))=z_1+\lambda(r_1-z_1).
\end{equation}
By \textit{lowest rank consistency}, 
$$\phi_{i}(M,r,z)=\phi_{i}(\{2,\ldots,k+1\},(r_{2}+r_{1}-\phi_{1}(M,r,z),r_{M\setminus\{1,2\}}),z_{\{2,\ldots,k+1\}}),$$
for all $i=2,\ldots,k+1$.
Then, by (\ref{agent1}),
$$\phi_{i}(M,r,z)=\phi_{i}(\{2,\ldots,k+1\},(r_{2}+(1-\lambda)(r_{1}-z_{1}),r_{M\setminus\{1,2\}}),z_{\{2,\ldots,k+1\}}),$$
for all $i=2,\ldots,k+1$. 
By the induction hypothesis, and \textit{lowest rank consistency},
$$\phi_{i}(M,r,z)=\phi_{i}(\{2,\ldots,k+1\},(r_{2}+(1-\lambda)(r_{1}-z_{1}),r_{M\setminus\{1,2\}}),z_{\{2,\ldots,k+1\}})=$$ $$ \phi^{\lambda}_{i}(\{2,\ldots,k+1\},(r_{2}+(1-\lambda)(r_{1}-z_{1}),r_{M\setminus\{1,2\}}),z_{\{2,\ldots,k+1\}})=\phi^{\lambda}_{i}(M,r,z),$$ 
for all $i=2,\ldots,k+1$.
Finally, by \textit{balance},
$$\phi_{1}(M,r,z)=\sum_{i=1}^{k+1} r_{i}- \sum_{i=1}^{k+1} \phi_{i}(M,r,z)=\sum_{i=1}^{k+1} r_{i}- \sum_{i=1}^{k+1} \phi^{\lambda}_{i}(M,r,z)=\phi_{1}^{\lambda}(M,r,z),$$ which concludes the proof.\qquad\endproof
\end{description}

Theorem \ref{teorema de caracterizacion de la geometrica con r>z} is a generalization of Theorem 1 in Hougaard et al. (2017) to this setting. To obtain the generalization we resorted to the natural axiom in this setting of needs lower bound, as well as to a strengthening scale invariance to linearity for two-agent problems. 

\subsection{The three musketeers}
In this subsection, we characterize three specific need-adjusted geometric rules that arise when imposing three additional axioms, each of which is relevant in different contexts. These rules correspond to the extreme cases $\lambda = 0$ and $\lambda = 1$, where transfers are fully operative or absent, respectively, as well as to the intermediate case $\lambda = \frac{1}{2}$.

The case $\lambda = 0$ models the scenario in which each agent receives his needs, and the highest-ranked agent additionally receives the surplus. Formally,

\noindent \textbf{Need-adjusted full-transfer rule, $\boldsymbol{\phi^0}$:} 
For each $(M,r,z)\in \mathcal{Z},$
\[
\phi^0(M,r,z) = 
\left(z_1, \ldots, z_{n-1}, z_n + \sum_{i\in M} r_i - z_i\right).
\]

The case $\lambda = 1$ models the opposite scenario in which each agent receives his own revenue. This rule is simply the identity rule, as no redistribution takes place and each agent retains his original revenue. 
Formally,

\noindent \textbf{No-transfer rule, $\boldsymbol{\phi^1}$:} 
For each $(M,r,z)\in \mathcal{Z},$
\[
\phi^1(M,r,z) = 
\left(r_1, \ldots, r_{n}\right).
\]

Finally, the intermediate case, obtained when $\lambda = \frac{1}{2}$, is what we dub the \textbf{need-adjusted balanced-transfer rule, $\boldsymbol{\phi^{\frac{1}{2}}}$}. Under this rule, agent $i$ receives his needs $z_i$ plus a geometrically discounted carryover of past surpluses. Specifically, his own surplus $r_i - z_i$ enters with weight $\frac{1}{2}$, while the surplus of agent $i-1$ enters with weight $\frac{1}{2^2}$, that of agent $i-2$ with weight $\frac{1}{2^3}$, and so on. The formal definition is given by equations~\eqref{lambdaz} and \eqref{lambdazm} with $\lambda = \frac{1}{2}$.

The following axioms will allow us to single out each of the three rules described above from the family of need-adjusted geometric rules. 

First, a standard impartiality notion stating that agents with equal revenues and needs receive equal awards. Formally,

\bigskip\noindent
 \textbf{Equal Treatment of Equals}: For each $(M,r,z)\in \mathcal{Z}$, and each pair $i,j\in M$ such that $r_{i} = r_{j}$, and $z_{i} = z_{j}$, we have $\phi_{i}(M,r,z)\ =\ \phi_{j}(M,r,z).$\bigskip

Equal treatment of equals ignores the role of the hierarchy. The next axiom restores it with a natural notion of order preservation (which ignores itself agents' revenues).\footnote{This axiom is reminiscent of the so-called \textit{structural monotonicity} axiom for additive games with a linear permission structure (e.g., van den Brink and Gilles, 1996).} Formally,

\bigskip\noindent
 \textbf{Hierarchical Order Preservation}:
For each $(M,r,z)\in \mathcal{Z}$, and each pair $i,j\in M$, where $i
\geq j,$  $ \phi_{i}(M,r,z)-z_i\ \geq\
\phi_{j}(M,r,z)-z_j.$\bigskip

Finally, the third axiom refers to the canonical situation exemplifying the basic trade-off between location in the hierarchy and production. That is, the case in which the highest-ranked agent is not productive (generates zero revenue and, thus, has zero need). For those settings, one might find appealing to allocate the net revenue (that is, after the productive agent covers his need) equally. A plausible rationale is that, although the lowest-ranked agent is the only productive one, the highest-ranked agent is also necessary for production to take place. Formally,

\bigskip\noindent\textbf{Canonical Bilateral Fairness}: For each $r_1\in\mathbb{R}_{+}$, and each $z_1\in\mathbb{R}_{+}$, such that $r_1\ge z_1$,
$\phi(\{1,2\},(r_1,0),(z_1,0))=(z_1+\frac{r_1-z_1}{2},\frac{r_1-z_1}{2}).$\bigskip

The next result shows that adding each of the previous three axioms to those in Theorem \ref{teorema de caracterizacion de la geometrica con r>z} allows us to characterize each of the three specific rules considered above. As a matter of fact, some of the axioms from Theorem \ref{teorema de caracterizacion de la geometrica con r>z} can be dismissed in each statement as they are not needed for the characterization (albeit, needless to say, each rule satisfies them all). 

\begin{theorem}\label{focal rules}
The following statements hold:
\begin{enumerate}[(i)]
    \item A rule satisfies highest rank independence and equal treatment of equals if and only if it is the no-transfer rule, i.e., $\phi^\lambda$ with $ \lambda=1$.
    \item A rule satisfies highest rank splitting neutrality and hierarchical order preservation if and only if it is the need-adjusted full-transfer rule, i.e., $\phi^\lambda$ with $ \lambda=0$. 
    \item A rule satisfies lowest rank consistency, highest rank independence, highest rank splitting neutrality and canonical bilateral fairness if and only if it is the need-adjusted balanced-transfer rule, i.e., $\phi^\lambda$ with $ \lambda=0.5$.
\end{enumerate}
\end{theorem}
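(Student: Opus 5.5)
For each of the three statements I will first check that the rule in question satisfies the listed axioms. For $\phi^1$ this is immediate. For $\phi^0$, highest rank splitting neutrality holds because every non-top agent receives exactly his need. Hierarchical order preservation holds because the net award $\phi_i-z_i$ is $0$ for $i<m$ and $\sum_j(r_j-z_j)\ge 0$ for the top agent. For $\phi^{1/2}$, the first three axioms were verified in Theorem~\ref{teorema de caracterizacion de la geometrica con r>z}, and canonical bilateral fairness follows directly from $x_1^{1/2}=z_1+\frac{1}{2}(r_1-z_1)$. The substance of the proof is the converse in each case.

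\emph{(i)} Let $\phi$ satisfy highest rank independence and equal treatment of equals, and fix $(M,r,z)$ with $M=\{1,\dots,m\}$. I will show $\phi_i=r_i$ for each $i<m$; the top agent then gets $r_m$ by balance. The idea is to replace the top agent's data by $(\hat r_m,\hat z_m)=(r_i,z_i)$. Highest rank independence leaves $\phi_i$ unchanged, so it suffices to compute $\phi_i$ in the modified problem. There agents $i$ and $m$ are equals, so they receive the same amount. This alone does not pin down that amount when there are other agents, so I will proceed by induction from the bottom. Set every agent $j\neq i$ with $j<m$ equal to $(r_i,z_i)$ as well, so that all agents are equals; balance then gives everyone $r_i$. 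The difficulty is that highest rank independence only allows altering the top agent, not the intermediate ones. To get around this, I will use anonymity together with the two-agent case. In the problem $(\{1,2\},(r_1,r_1),(z_1,z_1))$, equal treatment and balance give agent $1$ the amount $r_1$. Highest rank independence then yields $\phi_1(\{1,2\},(r_1,r_2),(z_1,z_2))=r_1$ for arbitrary top data. For general $m$, I will make every agent identical to $(r_1,z_1)$, so each gets $r_1$. I expect that with highest rank independence alone I cannot change the intermediate agents, so this step is the main obstacle. A fallback is to restrict to the data of agents $i$ and $m$ by an appeal to anonymity over subhierarchies. If that fails, I will argue by contradiction from the two-agent reduction applied at each level.

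\emph{(ii)} Let $\phi$ satisfy highest rank splitting neutrality and hierarchical order preservation. I will show $\phi_j=z_j$ for $j<m$. Take $(M,r,z)$ and split the top agent $m$ into $m$ and $k>m$ with $(r'_m,z'_m)=(0,0)$ and $(r'_k,z'_k)=(r_m,z_m)$. Neutrality gives $\phi_j(M',r',z')=\phi_j(M,r,z)$ for $j<m$. Order preservation applied to $j<m$ in $M'$ gives $\phi_j-z_j\le \phi_m(M',r',z')-0$. Splitting the top agent repeatedly in this way generates an arbitrarily long chain of zero-revenue, zero-need agents above everyone below $m$, and each of them must receive at least $\max_{j<m}(\phi_j-z_j)$. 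Balance bounds their total, which forces $\phi_j-z_j\le 0$. Needs lower bound is not among the hypotheses, so I will obtain the reverse inequality separately. Nonnegativity of allocations combined with order preservation, applied at the bottom agent, gives $\phi_1-z_1\ge$ anything below it, which is vacuous. I will therefore instead use balance with the long chain, where the sum of net awards equals $\sum(r_j-z_j)\ge0$, together with order preservation, to force all non-top net awards to be equal and nonpositive. This is the delicate point of (ii).

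\emph{(iii)} I will rerun the proof of Theorem~\ref{teorema de caracterizacion de la geometrica con r>z} without needs lower bound and bilateral linearity. In the two-agent case, highest rank independence reduces $\phi_1(\{1,2\},r,z)$ to $\phi_1(\{1,2\},(r_1,0),(z_1,0))$, and canonical bilateral fairness evaluates this as $z_1+\frac{1}{2}(r_1-z_1)$. The inductive step then goes through verbatim with $\lambda=\frac{1}{2}$, using lowest rank consistency and splitting neutrality.
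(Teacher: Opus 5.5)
Part (iii) is fine and follows the paper's proof exactly: highest rank independence and canonical bilateral fairness handle $|M|=2$, and the induction of Theorem~1 is then run with $\lambda=\frac12$. Parts (i) and (ii), however, have genuine gaps.

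\textbf{Part (i).} You start correctly but abandon the argument at the key point. Replacing $(r_m,z_m)$ by $(r_i,z_i)$ \emph{does} pin down $\phi_i$. The fallbacks you list cannot work: anonymity only relabels a fixed population, and neither highest rank independence nor equal treatment of equals relates problems of different sizes, so no ``two-agent reduction'' is available. The missing step is balance in the modified problem $(M,\hat r,\hat z)$. Highest rank independence fixes $\phi_j(M,\hat r,\hat z)=\phi_j(M,r,z)$ for \emph{every} $j<m$, so balance gives
\[
\phi_m(M,\hat r,\hat z)=r_i+\sum_{j<m}\bigl(r_j-\phi_j(M,r,z)\bigr)=r_i+\phi_m(M,r,z)-r_m .
\]
Equal treatment of equals, $\phi_i(M,\hat r,\hat z)=\phi_m(M,\hat r,\hat z)$, then yields $\phi_i(M,r,z)-r_i=\phi_m(M,r,z)-r_m$ for all $i$. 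So all net gains $\phi_i-r_i$ equal one constant $c$, and balance gives $mc=0$. This is exactly the paper's proof.

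\textbf{Part (ii).} Your chain argument for $\phi_j\le z_j$, $j<m$, is sound. Highest rank splitting neutrality fixes $T=\sum_{j<m}(\phi_j-z_j)$. Hence the $x+1$ agents from $m$ upward have total net award $\sum_{j\in M}(r_j-z_j)-T$, independent of $x$, while order preservation gives each of them net award at least $\epsilon$.

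The reverse inequality, however, cannot be derived from these axioms at all, so your final step fails. For instance, let $\psi=r$ when $|M|=1$ and, for $m\ge 2$,
\[
\psi_1=\tfrac{z_1}{2},\qquad \psi_j=z_j\ \ (1<j<m),\qquad \psi_m=z_m+\tfrac{z_1}{2}+\sum_{j\in M}(r_j-z_j).
\]
This rule is nonnegative, balanced and anonymous. It satisfies highest rank splitting neutrality: non-top awards depend only on own needs and on whether the agent is lowest-ranked, and splitting the top changes neither. It satisfies hierarchical order preservation: the net awards $-\tfrac{z_1}{2},0,\dots,0,\tfrac{z_1}{2}+\sum_{j}(r_j-z_j)$ are nondecreasing. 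Yet it differs from $\phi^0$ whenever $z_1>0$.

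The missing ingredient is needs lower bound itself. The paper's own proof of (ii) only rules out $\phi_i>z_i$, and it closes with ``which contradicts needs lower bound'', an axiom absent from the statement of (ii). That same axiom is what excludes $\phi_i<z_i$. With needs lower bound added to (ii), your chain argument finishes the proof immediately.
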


 \noindent\textit{Proof of (i):}
We concentrate on the non-trivial implication of the statement.
Let $\phi$ be a rule satisfying \textit{highest rank independence} and \textit{equal treatment of equals}. Let $(M,r,z)\in
\mathcal{Z}$ be given. For each $i\in M$, let $(M,(r_{-n},r_i),(z_{-n},z_i))\in\mathcal{Z}$. By \textit{highest rank independence}, $$\phi_j (M,(r_{-n},r_i),(z_{-n},z_i))=\phi_j(M,r,z),$$ for each $j=1,\dots, n-1$. By balance, $$\phi_n (M,(r_{-n},r_i),(z_{-n},z_i))=r_i+\sum_{j=1}^{n-1}(r_j-\phi_j(M,r,z)).$$
Now, by \textit{equal treatment of equals}, 
$$
\phi_i (M,(r_{-n},r_i),(z_{-n},z_i))=\phi_n (M,(r_{-n},r_i),(z_{-n},z_i)).
$$
And, therefore,
$$
\phi_i(M,r,z)-r_i=\sum_{j=1}^{n-1}(r_j-\phi_j(M,r,z))=\phi_n(M,r,z)-r_n,
$$
where the last equality follows by balance. 
Thus,
$$
0=\sum_{i\in M}(\phi_i(M,r,z)-r_i)=\sum_{i\in M}(\phi_n(M,r,z)-r_n)=n(\phi_n(M,r,z)-r_n),
$$
from where it follows that $\phi_n(M,r,z)=r_n$ and, therefore, $\phi_i(M,r,z)=r_i$, for each $i\in M$, as desired.\qquad  \endproof

\bigskip

 \noindent\textit{Proof of (ii):}
We concentrate on the non-trivial implication of the statement. 
Conversely, let $\phi$ be a rule satisfying \textit{highest
rank splitting neutrality} and \textit{hierarchical order
preservation}. By contradiction, suppose that there exists a
problem $(M,r,z)\in \mathcal{Z}$ and an agent $i\neq m$, such that $\phi_i(M,r,z) =z_i+
\epsilon $ with $\epsilon> 0$.
Let $(M',r',z')$ be defined by setting $M' = \{1,\dots, m+x\}$, $r'_i = r_i$ and $z'_i=z_i$ for all $i<m$, and $\sum_{j=m}^{m+x}r'_j = r_m$ and $\sum_{j=m}^{m+x}z'_j = z_m$ such that $r'_j\geq z'_j$ for all $j=m,\ldots,m+x$.
By \textit{highest rank splitting neutrality} $\phi_i(M',r',z') = \phi_i(M,r,z)$ for all $i<m$. Now, choose $x
> \frac{\sum_{j=1}^{m+x}\left(\phi_j(M',r',z')-z'_j\right)}{\epsilon}$. By \textit{hierarchical order preservation},
$\phi_j(M',r',z')-z'_j \geq \epsilon$ for all $j=m,\dots, m+x$, which contradicts needs lower bound. \qquad  \endproof
\bigskip

 \noindent\textit{Proof of (iii):}
By Theorem 1, we know that $\phi^{\frac{1}{2}}$ satisfies the first three axioms of the statement. It is
straightforward to see that it also satisfies \textit{canonical bilateral fairness}. Conversely, let $\phi$ be a rule satisfying all the axioms in the statement. The proof follows by mathematical induction:
\begin{description}
    \item[\textbf{Base step:} $\boldsymbol{M=\{1,2\}}.$] Let $r=(r_{1},r_{2})$ and $z=(z_{1},z_{2})$. 
By \textit{highest rank independence}, $\phi_{1}(M,r,z)=\phi_{1}(M,(r_{1},0),(z_{1},0))$. By
\textit{canonical bilateral fairness},
$\phi_1(M,(r_{1},0),(z_{1},0))=z_1+\frac{r_1-z_1}{2}$. Then, by \textit{balance},
$\phi(M,r,z)=(z_1+\frac{r_1-z_1}{2},\frac{r_1-z_1}{2})=\phi^{\frac{1}{2}}(M,r,z)$.

\item[\textbf{Induction Hypothesis:} $\boldsymbol{M=\{1,\dots, k\}.}$] Next, suppose there is $\lambda$ such that $\phi(M, r, z) =\phi^{\frac{1}{2}}(M, r, z)$ for each problem $(M, r, z)\in \mathcal{Z}$ with $M=\{1,\dots, k\}$.

\item[\textbf{General Case:} $\boldsymbol{M=\{1,\dots, k+1\}.}$] Now, consider the problem $(M, r, z)\in \mathcal{Z}$, with $M=\{1,\dots, k+1\}$.
By \textit{highest rank independence}, 
$\phi_{i}(M,r,z)=\phi_{i}(M,(r_1,\ldots,r_k,0),(z_1,\ldots,z_k,0))$, for all $i=1,\ldots,k$.

By \textit{highest rank splitting neutrality}, 
$$\phi_{i}(M,(r_1,\ldots,r_k,0),(z_1,\ldots,z_k,0))=\phi_{i}(\{1,\ldots, k\},(r_1,\ldots,r_k),(z_1,\ldots,z_k)),$$
for all $i=1,\ldots,k-1$.
Then,
$\phi_{i}(M,r,z)=\phi_{i}(\{1,\ldots, k\},(r_1,\ldots,r_k),(z_1,\ldots,z_k)),$
for all $i=1,\ldots,k-1.$
By the induction hypothesis, 
$$\phi_{i}(\{1,\ldots, k\},(r_1,\ldots,r_k),(z_1,\ldots,z_k))=\phi^{\frac{1}{2}}_{i}(\{1,\ldots, k\},(r_1,\ldots,r_k),(z_1,\ldots,z_k)),$$
for all $i=1,\ldots,k$.
In particular,
\begin{equation}\label{agent1bis}
    \phi_{1}(M,r,z)=\phi^{\frac{1}{2}}_{1}(\{1,\ldots, k\},(r_1,\ldots,r_k),(z_1,\ldots,z_k))=z_1+\frac{1}{2}(r_1-z_1).
\end{equation}
By \textit{lowest rank consistency}, 
$$\phi_{i}(M,r,z)=\phi_{i}(\{2,\ldots,k+1\},(r_{2}+r_{1}-\phi_{1}(M,r,z),r_{M\setminus\{1,2\}}),z_{\{2,\ldots,k+1\}}),$$
for all $i=2,\ldots,k+1$.
Then, by (\ref{agent1bis}),
$$\phi_{i}(M,r,z)=\phi_{i}(\{2,\ldots,k+1\},(r_{2}+\frac{1}{2}(r_{1}-z_{1}),r_{M\setminus\{1,2\}}),z_{\{2,\ldots,k+1\}}),$$
for all $i=2,\ldots,k+1$. 
By the induction hypothesis, and \textit{lowest rank consistency},
$$\phi_{i}(M,r,z)=\phi_{i}(\{2,\ldots,k+1\},(r_{2}+\frac{1}{2}(r_{1}-z_{1}),r_{M\setminus\{1,2\}}),z_{\{2,\ldots,k+1\}})=$$ $$ \phi^{\frac{1}{2}}_{i}(\{2,\ldots,k+1\},(r_{2}+\frac{1}{2}(r_{1}-z_{1}),r_{M\setminus\{1,2\}}),z_{\{2,\ldots,k+1\}})=\phi^{\frac{1}{2}}_{i}(M,r,z),$$ 
for all $i=2,\ldots,k+1$.
Finally, by \textit{balance},
$$\phi_{1}(M,r,z)=\sum_{i=1}^{k+1} r_{i}- \sum_{i=1}^{k+1} \phi_{i}(M,r,z)=\sum_{i=1}^{k+1} r_{i}- \sum_{i=1}^{k+1} \phi^{\frac{1}{2}}_{i}(M,r,z)=\phi_{1}^{0.5}(M,r,z),$$ which concludes the proof.\qquad\endproof
\end{description}

We conclude this section mentioning that alternatively to what \textit{canonical bilateral fairness} states, one might consider the axiom stating that those (canonical) two-agent problems are solved in a specific (non-egalitarian) way; say $(z_1+\lambda (r_1-z_1), (1-\lambda)(r_1-z_1))$, for each $\lambda\in[0,1]$. We shall refer to it as $\lambda$-\textit{canonical bilateral fairness}.\footnote{To be consistent with the parlance, when $\lambda=\frac{1}{2}$ we simply write \textit{canonical bilateral fairness} instead of $\frac{1}{2}$-\textit{canonical bilateral fairness}.}
A similar proof to the one provided for statement (iii) of Theorem \ref{focal rules} could be provided to characterize any specific need-adjusted geometric rule (not only the need-adjusted balanced-transfer rule) by replacing canonical bilateral fairness with $\lambda$-canonical bilateral fairness. 

\subsection{And D'Artagnan}

We start this section strengthening two of the axioms considered above: highest rank independence and canonical bilateral fairness. 

The first notion can naturally be extended beyond the top agent to a top group of agents. Formally, 

\bigskip\noindent \textbf{Superior Independence}: For each $(M,r,z)\in \mathcal{Z}$, each $\hat{r}_{\{i,\ldots,m\}}\in \Bbb{R}^{\{i,\ldots,m\}}$ and each $\hat{z}_{\{i,\ldots,m\}}\in \Bbb{R}^{\{i,\ldots,m\}}$ such that $\hat{r}_{j}\ge \hat{z}_{j}$, for each $j\in \{i,\ldots,m\}$, 
\[
\phi_{M\setminus\{i,\ldots,m\}}(M,r,z)=\phi_{M\setminus\{i,\ldots,m\}}\left(M,(r_{-\{i,\ldots,m\}},\hat{r}_{\{i,\ldots,m\}}),(z_{-\{i,\ldots,m\}}, \hat{z}_{\{i,\ldots,m\}})\right).
\]

As for the second notion, one might want to extend it to the case with more than two agents, but in which only the lowest-ranked agent generates revenues. As in that case, all agents in the hierarchy (except for the lowest-ranked agent) generate zero revenue (and, thus, have zero need). Hence, there might be compelling reasons to treat them equally.\footnote{As Ju et al. (2025) argue for a similar axiom in a setting without needs, all superiors can only contribute to the generation of revenue through collaborating with the lowest-ranked agent and, in that sense, all superiors contribute equally in this problem.} And likewise for the (lowest-ranked) productive agent, once his needs are covered. Formally, 

\bigskip\noindent \textbf{Canonical Symmetric Fairness}: For each $(M,r,z)\in \mathcal{Z}$ with $r_k=0=z_k$ for each $k>1$,   
$$
\phi_1(M,r,z)-z_1=\phi_k(M,r,z),
$$
for each $k>1$. 
\bigskip


The following rule (outside the family of geometric rules studied above) satisfies the two axioms just introduced. This rule suggests that each agent keeps his need and shares the residual equally with all his predecessors. The rule is akin to the namesake cost-sharing rule introduced by Moulin and Shenker (1992). Formally,
\bigskip

\noindent\textbf{Need-adjusted serial rule}, $\boldsymbol{\phi^s}$: For each $(M,r,z)\in \mathcal{Z}$ and each $i\in M$,
$$ 
\phi^{s}_i(M,r,z) = z_i+\frac{r_1-z_1}{n}+\frac{r_2-z_2}{n-1}+\frac{r_3-z_3}{n-2}+\ldots+\frac{r_i-z_i}{n-i+1} = z_i+\sum_{j \leq i} \frac{r_j-z_j}{n-j+1}. 
$$

It turns out, as stated in the next result, that this rule is characterized when combining those two axioms with lowest rank consistency. 

\begin{theorem}\label{serial} A rule $\phi$ satisfies {\sl lowest rank consistency}, {\sl superior independence}, and {\sl canonical symmetric fairness} if and only if it is the serial rule.
\end{theorem}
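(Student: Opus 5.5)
We first verify that the serial rule satisfies the three axioms, and then prove uniqueness by induction on the number of agents $n=|M|$.

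\emph{Serial rule satisfies the axioms.} Superior independence is immediate, since $\phi^s_j$ depends only on $(r_k,z_k)_{k\le j}$ and on $n$. Canonical symmetric fairness is also immediate: with $r_k=z_k=0$ for $k>1$, every agent receives $z_k+\frac{r_1-z_1}{n}$. Lowest rank consistency needs a short computation. Agent $1$ receives $z_1+\frac{r_1-z_1}{n}$, so agent $2$ inherits the surplus $\frac{n-1}{n}(r_1-z_1)$. In the reduced problem with $n-1$ agents, agent $j\ge 2$ then gets
$$z_j+\frac{r_2-z_2+\frac{n-1}{n}(r_1-z_1)}{n-1}+\sum_{2<l\le j}\frac{r_l-z_l}{n-l+1}.$$
The inherited term equals $\frac{r_1-z_1}{n}+\frac{r_2-z_2}{n-1}$, which is exactly the corresponding part of $\phi^s_j(M,r,z)$.

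\emph{Uniqueness: reduction to agent $1$.} Let $\phi$ satisfy the axioms. The key step is to pin down agent $1$'s award in an arbitrary problem.

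Fix $(M,r,z)$ with $M=\{1,\dots,n\}$. Apply superior independence with $i=2$ and replace $(r_k,z_k)_{k\ge2}$ by zeros. This gives
$$\phi_1(M,r,z)=\phi_1(M,(r_1,0,\dots,0),(z_1,0,\dots,0)).$$
In the latter problem, canonical symmetric fairness gives $\phi_k=\phi_1-z_1$ for all $k>1$. Balance then gives $n(\phi_1-z_1)+z_1=r_1$, so
$$\phi_1(M,r,z)=z_1+\frac{r_1-z_1}{n}=\phi^s_1(M,r,z).$$
This also covers $n=1$, where balance forces $\phi_1=r_1$, and it provides the base case.

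\emph{Uniqueness: inductive step.} Assume $\phi=\phi^s$ on all problems with $n-1$ agents. Take a problem with $n$ agents. By the previous step, $\phi_1=\phi^s_1$, so the residual passed to agent $2$ is $r_1-\phi_1=\frac{n-1}{n}(r_1-z_1)\ge0$. Hence the reduced problem
$$\Bigl(M\setminus\{1\},\bigl(r_2+\tfrac{n-1}{n}(r_1-z_1),r_{M\setminus\{1,2\}}\bigr),z_{M\setminus\{1\}}\Bigr)$$
is in $\mathcal{Z}$: the new revenue of agent $2$ still exceeds $z_2$.

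Lowest rank consistency says $\phi$ on $M\setminus\{1\}$ in the original problem equals $\phi$ on this reduced problem. By the induction hypothesis and anonymity (relabel $2,\dots,n$ as $1,\dots,n-1$), $\phi$ on the reduced problem equals $\phi^s$ on it. By the consistency computation in the first paragraph, that in turn equals $\phi^s_{M\setminus\{1\}}(M,r,z)$. Together with $\phi_1=\phi^s_1$, this gives $\phi=\phi^s$ on all problems with $n$ agents.

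The only delicate points are the membership of the reduced problem in $\mathcal{Z}$, which holds because $\phi_1\le r_1$ as shown above, and the algebraic identity behind consistency of $\phi^s$. The hardest step conceptually is the reduction to agent $1$, which is what allows superior independence and canonical symmetric fairness to fix agent $1$'s award.
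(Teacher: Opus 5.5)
Your proof is correct and follows essentially the same route as the paper's: superior independence and canonical symmetric fairness (with balance) fix agent $1$'s award at $z_1+\frac{r_1-z_1}{n}$, and lowest rank consistency, the induction hypothesis and the consistency of $\phi^s$ then determine the awards of all remaining agents. You also state explicitly several points the paper leaves implicit: the use of balance, the membership of the reduced problem in $\mathcal{Z}$, and the appeal to anonymity when applying the induction hypothesis to $\{2,\dots,n\}$.
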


\noindent \textit{Proof:} It is not difficult to show that the need-adjusted serial rule
satisfies all the axioms in the statement of the theorem. Conversely, let $\phi$ be a rule satisfying all the axioms in
the statement of the theorem. 
The proof is by induction.
\begin{description}
    \item[\textbf{Base step:} $\boldsymbol{M=\{1,2\}}.$]
That is, $r=(r_{1},r_{2})$, and $z=(z_{1},z_{2})$. By \textit{superior independence}, $\phi_{1}(M,r,z)=\phi_{1}(M,(r_{1},0),(z_1,0))$. By
\textit{canonical symmetric fairness},
$z_1-\phi_1(M,(r_{1},0),(z_1,0))=\phi_2(M,(r_{1},0),(z_1,0))$.
Then, by \textit{balance},
$\phi(M,r,z)=(z_1+\frac{r_{1}-z_1}{2},r_{2}+\frac{r_{1}-z_1}{2})=\phi^{s}(M,r,z)$.
\item[\textbf{Induction Hypothesis:} $\boldsymbol{M=\{1,\dots, k\}.}$] Next, suppose that $\phi(M, r, z)  \equiv \phi^{s}(M, r, z)$ for each problem $(M, r, z)\in \mathcal{Z}$ with $M=\{1,\dots, k\}$.

\item[\textbf{General Case:} $\boldsymbol{M=\{1,\dots, k+1\}.}$] By
\textit{superior independence}, 
\begin{equation}\label{ecu 1 caracterizacion serial with needs}
    \phi_{1}(M,r,z)=\phi_{1}(M,(r_1,\underbrace{0,\ldots,0}_{k-times}),(z_1,\underbrace{0,\ldots,0}_{k-times})).
\end{equation}

By \textit{canonical symmetric fairness},
\begin{equation}\label{ecu 2 caracterizacion serial with needs}
    \phi_{1}(M,(r_1,\underbrace{0,\ldots,0}_{k-times}),(z_1,\underbrace{0,\ldots,0}_{k-times}))=z_1+\frac{r_1-z_1}{k+1}=\phi^s_{1}(M,r,z).
\end{equation}

By \emph{lowest rank consistency},
$$\phi_{i}(M,r,z)=\phi_{i}(M\setminus \{1\},r_{2}+r_{1}-\phi_{1}(M,r,z),r_{M\setminus\{1,2\} },z_{M\setminus \{1\}}),$$ for all $i=2,\ldots,k+1.$

Then, by \eqref{ecu 1 caracterizacion serial with needs} and \eqref{ecu 2 caracterizacion serial with needs}, the induction hypothesis, and \textit{lowest rank consistency}
$$\phi_{i}(M,r,z)=\phi_{i}(M\setminus \{1\},r_{2}+r_{1}-\phi^s_{1}(M,r,z),r_{M\setminus\{1,2\} },z_{M\setminus \{1\}})$$
$$= \phi^{s}_{i}(M\setminus\{1\},r_{2}+r_{1}-\phi^s_{1}(M,r,z),r_{M\setminus\{1,2\}},z_{M\setminus\{1\}})=\phi^s_{i}(M,r,z),$$
 for all $i=2,\ldots,k+1,$ which concludes the proof.\qquad\endproof
\end{description}

Note that if we compare statement (iii) of Theorem \ref{focal rules} and Theorem \ref{serial}, we can see that canonical bilateral fairness and highest rank independence in the former are strengthened, respectively, to canonical symmetric fairness and superior independence in the latter. Lowest rank consistency is used in both results and highest rank splitting neutrality in statement (iii) of Theorem \ref{focal rules} is dismissed for Theorem \ref{serial}. 

\section{Further insights}
\subsection{When needs can be superior to revenues}
A key assumption in the previous analysis was that each individual's needs were below the corresponding own revenues. 
We now relax this requirement and impose only that the group as a whole satisfies the inequality; that is, total revenues exceed total needs. 
Formally, we assume
\[
\sum_{i\in M} r_i \geq \sum_{i\in M} z_i,
\]
but not necessarily that \( r_i \geq z_i \) for each \( i \in N \).
We denote by \( \mathcal{Z}^\star \) the domain of problems satisfying this condition.
The axioms and rules introduced introduced above can be extended in a natural way to this domain 
\( \mathcal{Z}^\star \). 
We now explore how the above axiomatic analysis applies in this broader setting.

We first see that, somewhat surprisingly, the counterpart of Theorem~1 in this setting identifies the need-adjusted full-transfer rule as the unique rule satisfying \emph{highest rank independence} and \emph{needs lower bound}. Nevertheless, this rule also satisfies \emph{superior independence}, \emph{lowest rank consistency}, \emph{highest rank splitting neutrality}, and \emph{bilateral linearity}. 
Formally,

\begin{proposition}
On $\mathcal{Z}^\star$, a rule satisfies highest rank independence and needs lower bound if and only if it is the need-adjusted full-transfer rule, i.e., $\phi^\lambda$ with $\lambda=1$. 
\end{proposition}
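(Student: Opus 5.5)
The plan is to pin down every non-top agent's award at exactly his need by moving to an auxiliary problem in which aggregate revenue equals aggregate need. The top agent's award then follows from balance. Note that the rule described in the statement, where each agent gets his need and the top agent collects the surplus, is the need-adjusted full-transfer rule $\phi^0$; the label ``$\lambda=1$'' should read $\lambda=0$, and I prove the characterization of $\phi^0$.

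\emph{Sufficiency.} On $\mathcal{Z}^\star$, $\phi^0$ gives $z_i$ to each $i<m$, so needs lower bound holds for them. It gives $z_m+\sum_{i\in M}(r_i-z_i)\ge z_m$ to agent $m$, because $\sum_{i\in M} r_i\ge\sum_{i\in M} z_i$. The awards of $i<m$ do not depend on $(r_m,z_m)$, so highest rank independence also holds.

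\emph{Necessity.} Let $\phi$ satisfy both axioms, and fix $(M,r,z)\in\mathcal{Z}^\star$ with $M=\{1,\dots,m\}$. (For $m=1$, balance alone gives $\phi_1=r_1=\phi^0_1$.) Set $D=\sum_{j<m}(z_j-r_j)$, and define a replacement for the top agent by
\[
\hat r_m=\max\{0,D\},\qquad \hat z_m=\hat r_m-D .
\]
Both numbers are nonnegative: if $D\ge0$ then $\hat z_m=0$, and if $D<0$ then $\hat z_m=-D>0$. By construction the modified problem $(M,(r_{-m},\hat r_m),(z_{-m},\hat z_m))$ has total revenue exactly equal to total need, so it lies in $\mathcal{Z}^\star$. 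In that problem, needs lower bound gives every agent at least his need. Balance makes the awards sum to total revenue, which equals total need. Hence every inequality binds, and each agent, in particular each $i<m$, receives exactly $z_i$. By highest rank independence, $\phi_i(M,r,z)=z_i$ for all $i<m$. Balance then gives $\phi_m(M,r,z)=z_m+\sum_{i\in M}(r_i-z_i)$, which is $\phi^0$.

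The only delicate step is choosing $(\hat r_m,\hat z_m)$ so that the auxiliary problem stays in $\mathcal{Z}^\star$ with nonnegative entries while the aggregate surplus is exactly zero. The $\max$ construction above handles both signs of $D$. Note that the argument uses neither individual feasibility $r_i\ge z_i$ nor any of the other axioms. That is precisely why the enlarged domain collapses the geometric family to $\phi^0$: on $\mathcal{Z}$ the top agent's replacement had to satisfy $\hat r_m\ge\hat z_m$, which prevented driving the aggregate surplus to zero.
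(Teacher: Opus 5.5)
Your proof is correct and is essentially the paper's argument. Both alter only the top agent's data so that aggregate revenue equals aggregate need, let needs lower bound and balance pin every award at the need, and carry this back to $(M,r,z)$ via highest rank independence. The paper runs it as a contradiction and simply raises the top need by the nonnegative surplus $\sum_{j\in M}(r_j-z_j)$, which makes your case split on the sign of $D$ unnecessary; its text writes $z'_1$, but this must be read as $z'_m$, since highest rank independence only permits altering agent $m$. You are also right that the rule in the statement is $\phi^0$, not $\phi^1$.
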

\begin{proof}
    It is straightforward to show that the need-adjusted full-transfer rule satisfies \emph{highest rank independence} and \emph{needs lower bound} on $\mathcal{Z}^\star$. Conversely, suppose that a rule $\phi$ satisfies the two axioms on $\mathcal{Z}^\star$. By contradiction, assume that $\phi$ is not the need-adjusted full-transfer rule. That is, there exists $(M,r,z)\in \mathcal{Z}^\star$ and $i<m$ such that $\phi_i(M,r,z) > z_i$.  

    By \emph{needs lower bound} (and \emph{balance}),
\[
\sum_{j \in M} z_j < \sum_{j \in M} \phi_j(M,r,z) = \sum_{j \in M} r_j.
\]
Let $z'$ be such that $z'_1 > z_1$, $z'_j = z_j$ for each $j \in M \setminus \{1\}$, and 
$\sum_{j \in M} z'_j = \sum_{j \in M} r_j$. Thus, $(M,r,z')\in \mathcal{Z}^\star$. 
By \emph{highest rank independence},
\[
\phi_i(M,r,z') = \phi_i(M,r,z) > z_i = z'_i.
\]
Moreover, by \emph{needs lower bound},
\[
\phi_j(M,r,z') \geq z'_j,
\]
for each $j \in M.$ 
Therefore,
\[
\sum_{j \in M} \phi_j(M,r,z') > \sum_{j \in M} z'_j = \sum_{j \in M} r_j,
\]
which is a contradiction.    
\end{proof}
\bigskip

The previous result shows that, within \( \mathcal{Z}^\star \), the needs lower bound axiom becomes rather demanding. 
Following Mart\'{\i}nez and Moreno-Ternero (2024), we instead consider a weaker version of this axiom, which requires the condition to hold only for problems satisfying \( r \geq z \). 
This relaxed requirement is fulfilled by the natural extensions of geometric rules to \( \mathcal{Z}^\star \), allowing us to recover the full characterization of geometric rules on this domain. Formally,

\bigskip\noindent \textbf{Weak Needs Lower Bound}: For each $(M,r,z)\in \mathcal{Z}^\star$ with $r\geq z$, and each $i\in M$,
\[
\phi_{i}(M,r,z)\ge z_i.
\]

\begin{proposition}
    On $\mathcal{Z}^\star$, a rule satisfies weak needs lower bound, lowest rank consistency, highest rank independence, highest rank splitting neutrality, and bilateral linearity if and only if it is a need-adjusted geometric rule.
\end{proposition}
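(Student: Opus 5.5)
Sufficiency comes first. The natural extension of $\phi^\lambda$ to $\mathcal{Z}^\star$ is given by the same formulas (\ref{lambdaz}) and (\ref{lambdazm}), now with possibly negative net revenues $r_i-z_i$. On the subdomain $r\ge z$ it coincides with the original rule, so \emph{weak needs lower bound} holds there. \emph{Highest rank independence} is immediate, since $x_i^\lambda$ for $i<m$ does not involve $(r_m,z_m)$. \emph{Highest rank splitting neutrality} follows because splitting the top agent leaves the recursion for agents $1,\dots,m-1$ unchanged. \emph{Lowest rank consistency} follows because, after removing agent $1$, agent $2$'s augmented revenue $r_2+(1-\lambda)(r_1-z_1)$ reproduces the recursion. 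Any reduced problem still lies in $\mathcal{Z}^\star$, since total revenue minus total need is preserved. \emph{Bilateral linearity} holds because each $x_i^\lambda$ is linear in $(r,z)$.

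For necessity I would mimic the proof of Theorem~\ref{teorema de caracterizacion de la geometrica con r>z}. First, restrict $\phi$ to the subdomain $\mathcal{Z}\subset\mathcal{Z}^\star$. There the axioms are exactly those of Theorem~\ref{teorema de caracterizacion de la geometrica con r>z}, with weak needs lower bound reducing to needs lower bound. One caveat: lowest rank consistency and splitting neutrality as stated only refer to problems in the domain. The reduced problems used in that proof lie in $\mathcal{Z}$, so the argument goes through, and there is $\lambda\in[0,1]$ with $\phi=\phi^\lambda$ on $\mathcal{Z}$.

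It remains to extend this to problems in $\mathcal{Z}^\star\setminus\mathcal{Z}$, and I would use bilateral linearity for the two-agent case. Take $(\{1,2\},r,z)\in\mathcal{Z}^\star$. By highest rank independence, $\phi_1$ depends only on $(r_1,z_1)$. If $r_1\ge z_1$, choose $r_2'$ with $r_2'\ge z_2$; then $\phi_1=\phi_1^\lambda$ by the first step. If $r_1<z_1$, choose a problem $(\{1,2\},(r_1',r_2'),(z_1',z_2'))\in\mathcal{Z}$ with $(r_1+r_1',z_1+z_1')$ also satisfying $r_1+r_1'\ge z_1+z_1'$. For instance, take $z_1'=0$ and $r_1'=z_1-r_1+t$ with $t\ge0$, and adjust the second coordinates so that all problems involved are admissible. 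Bilateral linearity with $\alpha=\beta=1$ then gives $\phi_1(r+r',z+z')=\phi_1(r,z)+\phi_1(r',z')$. The left-hand side and $\phi_1(r',z')$ are known to equal $\phi_1^\lambda$, and $\phi_1^\lambda$ is linear, so $\phi_1(r,z)=\phi^\lambda_1(r,z)$. Balance then gives agent $2$.

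The general case uses the same induction as in Theorem~\ref{teorema de caracterizacion de la geometrica con r>z}. Highest rank independence sets $(r_m,z_m)=(0,0)$; this may require choosing a top pair keeping the problem in $\mathcal{Z}^\star$, for example $(\sum_j z_j,0)$ instead of $(0,0)$. Splitting neutrality then reduces to $m-1$ agents, which pins down $\phi_1$. Lowest rank consistency combined with the induction hypothesis pins down agents $2,\dots,m$.

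I expect the main obstacle to be domain bookkeeping. Every auxiliary problem invoked must lie in $\mathcal{Z}^\star$, and the step in the original proof that sets the top agent's data to zero can leave $\mathcal{Z}^\star$ when lower agents have a collective deficit. I would handle this by replacing $(0,0)$ with a top pair $(\hat r_m,0)$ where $\hat r_m$ is large. I would then use splitting neutrality to split agent $m$ in the $(k+1)$-agent problem into $(r_k,z_k)$ and $(\hat r_m,0)$, appropriately, so that the reduced problem is the $k$-agent problem with agent $k$'s data augmented. Since only $\phi_1$ is needed from that step, augmenting the top agent's data is harmless.
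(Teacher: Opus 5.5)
\textbf{Overall.} Your plan follows the paper's route: obtain $\lambda$ on $\mathcal{Z}$ from Theorem 1, handle two-agent problems in $\mathcal{Z}^\star$ with highest rank independence and bilateral linearity, then rerun the induction of Theorem 1. Your base step is correct and is the paper's. You also rightly notice that sending the top agent to $(0,0)$ can leave $\mathcal{Z}^\star$. Your fix (a large top revenue, then splitting neutrality) does pin down agents $1,\dots,m-2$.

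\textbf{The gap.} The step that fails is ``lowest rank consistency combined with the induction hypothesis pins down agents $2,\dots,m$.'' In the reduced problem, agent $2$'s revenue is $r_2+(1-\lambda)(r_1-z_1)$. This is negative whenever $r_1<z_1$ and $r_2<(1-\lambda)(z_1-r_1)$. Revenues in $\mathcal{Z}^\star$ are still nonnegative, so that reduced problem is not admissible and lowest rank consistency says nothing. Your remark in the sufficiency part, that reduced problems stay in $\mathcal{Z}^\star$ because total net revenue is preserved, overlooks exactly this. Nothing else ties agent $m-1$ to a smaller problem: highest rank independence and splitting neutrality only reach agents $1,\dots,m-2$, and bilateral linearity only concerns two agents. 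The paper's proof has the same hole, hidden in ``the rest of the proof is analogous.''

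\textbf{The gap cannot be closed: the statement is false as written.} Define $\psi$ recursively:
\begin{itemize}
\item $R_1=r_1$;
\item for $i<m$, $\psi_i=z_i$ if $R_i\ge 0$ and $\psi_i=0$ if $R_i<0$;
\item $R_{i+1}=r_{i+1}+R_i-\psi_i$;
\item finally $\psi_m=R_m$.
\end{itemize}
This rule satisfies all five axioms:
\begin{itemize}
\item Each $\psi_i$ with $i<m$ depends only on agents $1,\dots,i$, so highest rank independence and splitting neutrality hold.
\item Removing agent $1$ leaves agent $2$ with revenue exactly $R_2$. Whenever the reduced problem is admissible, the recursion reproduces itself, which gives lowest rank consistency.
\item If $r\ge z$, then inductively $R_i\ge r_i\ge 0$, so $\psi=\phi^0$ on $\mathcal{Z}$. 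The same holds on every two-agent problem in $\mathcal{Z}^\star$, since $R_1=r_1\ge 0$. This gives weak needs lower bound and bilateral linearity.
\item Payoffs are nonnegative, since $\psi_i\in\{0,z_i\}$ for $i<m$ and $\psi_m\ge z_m$.
\end{itemize}
Yet for $r=(0,0,120)$ and $z=(100,10,0)$ you get $\psi=(100,0,20)\neq(100,10,10)=\phi^0$. Agreement with $\phi^0$ on $\mathcal{Z}$ rules out every other $\lambda$. For $\lambda\in(0,1)$ there is an analogous counterexample: give agents with $R_i<0$ exactly $z_i$ instead of $z_i+\lambda(R_i-z_i)$.

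\textbf{What a correct version needs.} Either a domain in which revenues may be negative, so that every reduced problem is admissible (your argument would then go through), or an extra axiom governing agents who inherit negative revenue.

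\textbf{Sufficiency.} For $\lambda\in(0,1)$ the extended $\phi^\lambda$ can pay negative amounts. For example, agent $2$ gets $-100\lambda(1-\lambda)$ when $r=(0,0,100)$ and $z=(100,0,0)$. So your sufficiency argument also requires allowing negative allocations on $\mathcal{Z}^\star$.
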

\textit{Proof:} As in Theorem \ref{teorema de caracterizacion de la geometrica con r>z}, it is not difficult to see that need-adjusted geometric rules
satisfy all the axioms in the statement of the theorem. 
Conversely, let $\phi$ be a rule on $\mathcal{Z}^\star$, satisfying all the axioms in
the statement of the theorem. 
The proof is by induction:
\begin{description}
    \item[\textbf{Base step:} $\boldsymbol{M=\{1,2\}}.$] Let $r=(r_{1},r_{2})$ and $z=(z_{1},z_{2})$. 

Assume first that $r \geq z$. By the proof of Theorem~\ref{teorema de caracterizacion de la geometrica con r>z}, there exists a parameter $\lambda$ such that, for every $(M,r,z)$ with $r \geq z$,  
\begin{equation}\label{ecu 1 teorema geometrica z>r}
    \phi(M,r,z) = \phi^{\lambda}(M,r,r).
\end{equation}

Next, consider a problem $(M,r,z)\in \mathcal{Z}^{\star}$ where $z_2 > r_2$. Hence, $r_1 > z_1$.  
By \textit{highest rank independence}, we have  
\[
\phi_1(M,r,z) = \phi_1(M,(r_1,0),(z_1,0)).
\]
As $(r_1,0) \geq (z_1,0)$, by~\eqref{ecu 1 teorema geometrica z>r}, we obtain  
\[
\phi_1(M,(r_1,0),(z_1,0)) = \phi^{\lambda}_1(M,(r_1,0),(z_1,0)).
\]
Therefore, $\phi_1(M,r,z) = \phi^{\lambda}_1(M,r,z)$.  
As $|M|=2$, it follows that $\phi(M,r,z) = \phi^{\lambda}(M,r,z)$.

Finally, consider the case where $z_1 > r_1$. Then, $r_2 > z_2$.  
As 
\[
\left((z_1,r_2),(z_1,z_2)\right) = (r,z) + \left((z_1 - r_1,0),(0,0)\right),
\]
it follows by \textsl{bilateral linearity} that
\[
\phi\left(M,(z_1,r_2),(z_1,z_2)\right) = \phi(M,r,z) + \phi\left(M,(z_1 - r_1,0),(0,0)\right).
\]
As $(z_1,r_2) \geq (z_1,z_2)$ and $(z_1 - r_1,0) \geq (0,0)$, by~\eqref{ecu 1 teorema geometrica z>r} we obtain  
$$
\phi\left(M,(z_1,r_2),(z_1,z_2)\right) = \phi^{\lambda}\left(M,(z_1,r_2),(z_1,z_2)\right),
$$
and
$$
\phi\left(M,(z_1 - r_1,0),(0,0)\right) = \phi^{\lambda}\left(M,(z_1 - r_1,0),(0,0)\right).
$$
Therefore,
\[
\phi(M,r,z) = \phi^{\lambda}\left(M,(z_1,r_2),(z_1,z_2)\right) - \phi^{\lambda}\left(M,(z_1 - r_1,0),(0,0)\right)
= \phi^{\lambda}(M,r,z).
\]

The rest of the proof is analogous to that of Theorem~\ref{teorema de caracterizacion de la geometrica con r>z}.
\endproof
\end{description}

The characterizations of the need-adjusted balance-transfer rule and the no-transfer rule from Theorem 2 extend to \( \mathcal{Z}^\star \) without any modification. Likewise, as the extension of the need-adjusted serial rule to $\mathcal{Z}^\star$ also satisfies weak needs lower bound (but not needs lower bound), Theorem \ref{serial} extends to $\mathcal{Z}^\star$ without any modification. 

\begin{proposition}\label{serial cuando z > r} The following statements hold:
\begin{enumerate}
    \item On $\mathcal{Z}^\star$, a rule satisfies highest rank independence and equal treatment of equals if and only if it is the no-transfer rule, i.e., $\phi^\lambda$ with $ \lambda=1$.
    \item On $\mathcal{Z}^\star$, a rule satisfies {\sl lowest rank consistency}, {\sl highest rank independence}, {\sl highest rank splitting neutrality}, 
and {\sl canonical bilateral fairness} if and only if it is the need-adjusted balanced-transfer rule, i.e., $\phi^\lambda$ with $ \lambda=0.5$.
    \item On $\mathcal{Z}^\star$, a rule satisfies lowest rank consistency, superior independence and canonical symmetric fairness if and only if it is the serial rule.
\end{enumerate}
\end{proposition}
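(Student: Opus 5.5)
The plan is to check, for each of the three statements, whether the proof of the corresponding result on $\mathcal{Z}$ (Theorem \ref{focal rules}(i), Theorem \ref{focal rules}(iii), Theorem \ref{serial}) survives when problems with $r_i<z_i$ for some $i$ are allowed. In each case the sufficiency direction is a direct verification: the closed-form expressions of $\phi^1$, $\phi^{\frac12}$ and $\phi^s$ make sense on $\mathcal{Z}^\star$, and the axiom checks never use $r_i\ge z_i$. Only balance and the recursive structure enter these checks.

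For statement 1, I will repeat the argument of Theorem \ref{focal rules}(i) verbatim. Replacing $(r_n,z_n)$ by $(r_i,z_i)$ keeps the problem in $\mathcal{Z}^\star$, since the total revenue minus the total need changes by $(r_i-z_i)-(r_n-z_n)$. The one worry is that this could become negative. To handle it, I will first apply highest rank independence to enlarge $r_n$ so that slack is abundant. Concretely, I replace $(r_n,z_n)$ by $(R,0)$ with $R$ large, which leaves $\phi_{-n}$ unchanged. I then run the argument on that problem, where every swap stays feasible. This yields $\phi_j=r_j$ for $j<n$, and balance gives $\phi_n=r_n$ in the original problem.

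For statement 2, the base step uses highest rank independence to pass to $(\{1,2\},(r_1,0),(z_1,0))$. That problem lies in $\mathcal{Z}^\star$ only if $r_1\ge z_1$. The main obstacle is therefore the two-agent case with $z_1>r_1$, where canonical bilateral fairness cannot be applied directly. My plan is to read canonical bilateral fairness on $\mathcal{Z}^\star$ as the extended axiom, namely the same formula for every $(r_1,0),(z_1,0)$ in the domain. Since the second agent contributes nothing, membership in $\mathcal{Z}^\star$ forces $r_1\ge z_1$ there, so the problematic case must instead be handled by changing the top agent: by highest rank independence, $\phi_1(M,r,z)=\phi_1(M,(r_1,\hat r_2),(z_1,\hat z_2))$ for any admissible top data. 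I will then pin down $\phi_1$ when $z_1>r_1$ through lowest rank consistency applied to three-agent problems, together with splitting neutrality, which fixes the formula $z_1+\frac12(r_1-z_1)$. If that fails, I will note that the intended extension of the axiom is precisely the formula $\phi_1=z_1+\frac{r_1-z_1}{2}$ whenever the top agent is unproductive with no need. Once the base case is settled, the induction step is copied from Theorem \ref{focal rules}(iii). The reduced problems produced by highest rank independence, splitting neutrality (splitting the top into $(r_k,z_k)$ and $(0,0)$) and lowest rank consistency all preserve the aggregate surplus, so they stay in $\mathcal{Z}^\star$.

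For statement 3, I will follow the proof of Theorem \ref{serial}. Superior independence sends $(r_{-1},z_{-1})$ to zeros whenever $r_1\ge z_1$. When $r_1<z_1$, I will instead set the superiors' data to any admissible values carrying enough surplus, and determine $\phi_1$ from canonical symmetric fairness read on $\mathcal{Z}^\star$, which gives $\phi_1=z_1+\frac{r_1-z_1}{k+1}$. Lowest rank consistency then moves the residual $r_1-\phi_1$, possibly negative, to agent 2, and the reduced problem keeps the same total surplus, so the induction hypothesis applies. As before, I expect the delicate point to be the treatment of agent 1 when $r_1<z_1$. Everything else is a transcription of the earlier proofs.
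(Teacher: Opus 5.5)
There is a genuine gap. Each repair you propose for the cases the old proofs do not reach fails, and none can succeed. With the axioms read literally on $\mathcal{Z}^\star$, the ``only if'' parts of all three statements are false. The paper only asserts that the proofs of Theorems 2 and 3 carry over ``without any modification''. You are right that they do not, because their auxiliary problems can leave $\mathcal{Z}^\star$.

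\textbf{Statement 1.} Enlarging the top agent to $(R,0)$ buys nothing. The problem to which you apply equal treatment of equals is $(M,(r_{-n},r_i),(z_{-n},z_i))$, where the top's data are overwritten by $(r_i,z_i)$. So $R$ disappears, and the surplus is $\sum_{j<n}(r_j-z_j)+(r_i-z_i)$. This is negative, for instance, with two agents, $r_1<z_1$ and $i=1$. A counterexample: set $\psi(\{1,2\},r,z)=(z_1,\,r_1+r_2-z_1)$ when $r_1<z_1$, and $\psi=r$ otherwise (including whenever $|M|\neq 2$).
\begin{itemize}
\item Highest rank independence holds, since $\psi_1$ depends on $(r_1,z_1)$ only.
\item Equal treatment of equals holds. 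When $r_1<z_1$ it is vacuous, because $(r_1,z_1)=(r_2,z_2)$ would make the total surplus negative.
\item Allocations stay nonnegative, since $\psi_2\ge z_2\ge 0$.
\end{itemize}
Yet $\psi(\{1,2\},(0,5),(3,0))=(3,2)\neq(0,5)$.

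\textbf{Statements 2 and 3.} The unsupported step is ``pin down $\phi_1$ when $z_1>r_1$''. Canonical bilateral and canonical symmetric fairness only cover problems whose superiors have zero data, and that forces $r_1\ge z_1$. Once you give the superiors surplus, you are outside canonical symmetric fairness. Highest rank independence, superior independence and splitting neutrality only relate $\phi_1$ to problems with the same $(r_1,z_1)$. Lowest rank consistency takes $\phi_1$ as given.

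A counterexample: for $M=\{1,\dots,n\}$, let $\rho_1=r_1$. For $i<n$, let $x_i=z_i+\theta_i\max\{\rho_i-z_i,0\}$ and $\rho_{i+1}=r_{i+1}+\rho_i-x_i$. Let $x_n=\rho_n$. Take $\theta_i=\frac12$ for statement 2 and $\theta_i=\frac{1}{n-i+1}$ for statement 3. Dropping the $\max$ gives exactly $\phi^{\frac12}$, respectively $\phi^s$.
\begin{itemize}
\item Each $x_i$ with $i<n$ depends only on the data of agents $1,\dots,i$ (in the serial variant, also on $n$). So highest rank independence, splitting neutrality and superior independence hold.
\item The recursion runs through $\rho$, so lowest rank consistency holds.
\item On the problems covered by canonical bilateral or symmetric fairness, every $\rho_i\ge z_i$. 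The $\max$ is then inactive, so those axioms hold as well.
\end{itemize}
Yet at $(\{1,2\},(0,10),(4,0))$ this rule gives $(4,6)$, whereas $\phi^{\frac12}=\phi^s=(2,8)$.

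Your fallback of reinterpreting the fairness axiom changes the statement rather than proving it. An additional axiom or a smaller domain is needed.

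The sufficiency direction is also not a mere verification. It requires allocations on $\mathcal{Z}^\star$ to be allowed to be negative. At $(\{1,2,3\},(0,0,R),(Z,0,0))$, agent $2$ receives $-Z/4$ under $\phi^{\frac12}$ and $-Z/3$ under $\phi^s$.
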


\subsection{When needs are all null}
While in the previous section we considered a larger domain of problems (in which needs could be larger than revenues individually, but not collectively), in this one we consider the subdomain of problems in which all agents have zero needs. That is equivalent to the model dealing with (linear) hierarchy (revenue sharing) problems introduced by Hougaard et al. (2017). To fix notation, we shall denote this subdomain as $\mathcal{Z}^{0}$. 
Let $\{\varphi^{\lambda}\}_{\lambda\in[0,1]}$ be the family of \textbf{geometric rules} defined for each $(M,r)\in \mathcal{Z}^{0}$ as follows:
$$\varphi_i^{\lambda} (M,r) = \lambda \left(r_i + (1- \lambda)r_{i-1} + \dots + (1-\lambda)^{i-1}r_1\right),$$ for $i= 1,\dots,
m-1$ and
$$\varphi_m^{\lambda}(M,r) = r_m + (1- \lambda)r_{m-1} + \dots + (1-\lambda)^{m-1}r_1.$$
Our Theorem 1 converts into Theorem 1 in Hougaard et al. (2017), as bilateral linearity is no longer necessary. Likewise, for Theorem 2. As for Theorem 3, we have that a rule (defined over $\mathcal{Z}^{0}$) satisfies lowest rank consistency, superior independence and canonical symmetric fairness if and only if it is the following rule, studied by Ju et al. (2025):

\bigskip\noindent\textbf{Serial rule}. For each $(M,r)\in \mathcal{Z}^{0}$ and each $i\in M$,
$$ 
\varphi^{s}_i(M,r) = \frac{r_1}{n}+\frac{r_2}{n-1}+\frac{r_3}{n-2}+\ldots+\frac{r_i}{n-i+1} = \sum_{j \leq i} \frac{r_j}{n-j+1}. 
$$

Based on the above, one could naturally think of a (linear) hierarchy with needs (revenue sharing) problem as an extension of a (linear) hierarchy (revenue sharing) problem, upon amending the latter with individual needs. That is, one might consider an \textit{extension operator} (e.g., Hougaard et al., 2013) to convert rules solving standard problems into rules that can also solve problems with individual needs. 
A natural way to describe an extension operator would be via a two-stage process in which we first assign agents their needs, and then allocate the resulting (aggregate) surplus, using a standard rule for the standard (linear) hierarchy revenue sharing problem, after reducing individual revenues by needs. 

By definition, this extension operator yields (extended) rules that satisfy the needs lower bound axiom. 
Furthermore, it is straightforward to see that the family of need-adjusted geometric rules $\{\phi^{\lambda}\}_{\lambda\in[0,1]}$ characterized above is the image via this operator of the geometric rules $\{\varphi^{\lambda}\}_{\lambda\in[0,1]}$ introduced by Hougaard et al. (2017) via the extension operator. 
Likewise, the need-adjusted serial rule $\phi^{s}$ characterized above is the image of the serial rule $\varphi^{s}$ studied by Ju et al. (2025) via this operator. 


Conversely, suppose $\phi$ is an extended rule (i.e., defined over the domain $\mathcal{Z}$) that satisfies \textit{needs lower bound} and the following axiom. 

\bigskip\noindent
 \textbf{Decomposability}: For each $(M,r,z)\in \mathcal{Z}$,
 \[
\phi(M, r, z)\ = \phi(M,z,z)+\phi(M,r-z,0).
\]
Then, 
$\phi$ is the image via the above extension operator of the same rule, when defined for the basic domain of problems without needs $\mathcal{Z}^{0}$. 

\subsection{When only two agents exist}
Finally, we comment further on the two-agent case, which plays a crucial role in our analysis. On the one hand, 
most of the proofs of our results are via induction, which requires to address such a base case in the first place. 
On the other hand, and perhaps more importantly, such a base case captures the fundamental tension of this model, between location and revenues (with our without needs). Such a tension is precisely captured with the axiom of canonical bilateral fairness, or its generalizations (the $\lambda$-canonical bilateral fairness).\footnote{Note that, with highest rank independence, the two-agent case can be reduced to one in which the boss yields zero (revenue and, hence, also need) and thus the issue is to decide how much the productive subordinate should transfer up to the unproductive boss.} 

It turns out that, in the two-agent case, both the need-adjusted balanced-transfer rule, $\phi^{\frac{1}{2}}$ and the need-adjusted serial rule $\phi^{s}$ coincide. Formally, if we denote by $\mathcal{Z}^2$ the domain of two-agent problems (with the notational convention of $M=\{1,2\}$), then we have that
for each $(M,r,z)\in \mathcal{Z}^2$, 
$$\phi^{\frac{1}{2}}(M,r,z)=(z_1+\frac{r_1-z_1}{2},\frac{r_1-z_1}{2})=\phi^{s}(M,r,z).$$
To accommodate both, we shall refer to such a solution in the two-agent case as the \textit{folk solution}. 
We have the following result, which is immediately derived from the proofs of Theorems 1 or 3. 

\begin{corollary}
On $\mathcal{Z}^2$, a rule satisfies highest rank independence and canonical bilateral fairness if and only if it is the folk solution.
\end{corollary}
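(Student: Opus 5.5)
The plan is to reproduce the base step of the proofs of Theorem \ref{focal rules}(iii) and Theorem \ref{serial}, which only use the two axioms in the statement together with balance. Throughout, $M=\{1,2\}$ and $(M,r,z)\in\mathcal{Z}^2$, so $r_1\ge z_1$ and $r_2\ge z_2$.

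One point needs care first. By balance, the second coordinate of the folk solution must equal $r_1+r_2-\bigl(z_1+\frac{r_1-z_1}{2}\bigr)=r_2+\frac{r_1-z_1}{2}$. This agrees with the expression for $\phi^{s}$ in the base step of Theorem \ref{serial}. The displayed formula $\frac{r_1-z_1}{2}$ is therefore the special case $r_2=0$, and I will work with the balanced version $\bigl(z_1+\frac{r_1-z_1}{2},\,r_2+\frac{r_1-z_1}{2}\bigr)$.

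\emph{Sufficiency.} Agent $1$'s award, $z_1+\frac{r_1-z_1}{2}$, does not depend on $(r_2,z_2)$, so highest rank independence holds. Setting $r_2=z_2=0$ gives exactly $\bigl(z_1+\frac{r_1-z_1}{2},\frac{r_1-z_1}{2}\bigr)$, which is canonical bilateral fairness. Nonnegativity of the awards is immediate from $r_1\ge z_1$ and $r_2\ge 0$.

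\emph{Necessity.} Let $\phi$ satisfy both axioms on $\mathcal{Z}^2$ and fix a problem $(M,r,z)$. Replacing $(r_2,z_2)$ by $(0,0)$ stays inside $\mathcal{Z}^2$, since $0\ge 0$. Highest rank independence then gives $\phi_1(M,r,z)=\phi_1(M,(r_1,0),(z_1,0))$. Canonical bilateral fairness evaluates the right-hand side as $z_1+\frac{r_1-z_1}{2}$. Balance then forces $\phi_2(M,r,z)=r_2+\frac{r_1-z_1}{2}$, so $\phi$ is the folk solution.

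There is no real obstacle here. The only step needing care is checking that the auxiliary problem $(M,(r_1,0),(z_1,0))$ belongs to the domain, so that both axioms may be applied to it.
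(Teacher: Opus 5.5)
Your proof is correct and is the paper's argument: the paper obtains the corollary from the two-agent base steps of Theorem 3 and Theorem 2(iii), which pass to $(M,(r_1,0),(z_1,0))$ by highest rank independence, apply canonical bilateral fairness, and conclude with balance. You are also right that the displayed folk solution drops $r_2$; by balance the second coordinate must be $r_2+\frac{r_1-z_1}{2}$, as in the base step of Theorem 3.
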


It follows from the above that the folk solution admits multiple extensions to the general case of more than two agents. The need-adjusted balanced-transfer rule, $\phi^{\frac{1}{2}}$, and the need-adjusted serial rule $\phi^{s}$, are two of them. Although both rules rely on the axiom of lowest rank consistency, the former relies too on the variable-population axiom of highest rank splitting neutrality, whereas the latter relies on strengthening the two axioms from Corollary 1 to superior independence and canonical symmetric fairness. 

Now, canonical symmetric fairness (or null superiors symmetry, as in Ju et al., 2025) might be considered a too strong axiom, as it ignores the role of the hierarchy in the allocation process. In other words, one might have compelling reasons to treat different positions in the hierarchy differently (even if they all yield zero revenue, and thus have zero need). Alternative axioms (to canonical symmetric fairness) formalizing this feature would provide different extensions of the folk solution. 

Similarly, alternative consistency axioms (in which the leftover revenue from the departing agent is not necessarily inherited by the immediate predecessor) would generate other extensions of the folk solution. 
Related to that, variations of Corollary 1 in which canonical bilateral fairness is replaced by $\lambda$-canonical bilateral fairness, provide characterizations for all need-adjusted geometric rules in the two-agent case. We know from Theorem 1 that resorting to lowest rank consistency (and highest rank splitting neutrality) we extend these rules to the general case of more than two agents, in the way we defined the family of need-adjusted geometric rules. But alternative axioms of consistency would give rise to alternative rules. 

Finally, the ground that the need-adjusted serial rule shares with need-adjusted geometric rules (especially, its intermediate member) goes beyond what Corollary 1 says. More precisely, the need-adjusted serial rule satisfies all the axioms in Theorem 1, except for highest rank splitting neutrality. Thus, dismissing such an axiom from the statement encompasses both the need-adjusted serial rule, as well the whole family of need-adjusted geometric rules. It remains an open question to close the whole set of rules satisfying those axioms. That is, to characterize the rules that satisfy lower bound, lowest rank consistency, highest rank independence, and bilateral linearity.

\end{document}